\documentclass[journal]{IEEEtran}
\usepackage[T1]{fontenc}
\usepackage{ifpdf}
\usepackage{cite}
\usepackage[pdftex]{graphicx}
\usepackage{amsmath}
\usepackage{amssymb}
\interdisplaylinepenalty=2500
\usepackage{bm}
\usepackage{array}
\usepackage{fixltx2e}
\usepackage{stfloats}

\usepackage{mathtools} 

\usepackage[caption=false,font=footnotesize]{subfig}

\usepackage{diagbox}

\usepackage{makecell}

\usepackage{multirow}

\usepackage{threeparttable}

\usepackage{color,soul}
\soulregister\cite7

\usepackage{amsthm}
\makeatletter
\renewenvironment{proof}[1][\proofname]{\par
  \normalfont
  \topsep6\p@\@plus6\p@ \trivlist
  \item[\hskip3\labelsep\itshape    
    #1\@addpunct{.}]\ignorespaces
}{%
  \qed\endtrivlist
}
\makeatother

\usepackage[hidelinks]{hyperref}

\begin{document}

\theoremstyle{plain} 
\newtheorem{theorem}{Theorem}
\newtheorem{proposition}{Proposition}

\newtheorem{definition}{Definition}

\theoremstyle{remark} 
\newtheorem{remark}{Remark}

\title{\huge On the Computation and Approximation of Outage Probability in Satellite Networks with Smart Gateway Diversity}

\author{Christos~N.~Efrem and~Athanasios~D.~Panagopoulos,~\IEEEmembership{Senior~Member,~IEEE}
\thanks{C. N. Efrem and A. D. Panagopoulos are with the School of Electrical and Computer Engineering, National Technical University of Athens, 15780 Athens, Greece (e-mails: chefr@central.ntua.gr, thpanag@ece.ntua.gr).

This article has been accepted for publication in \textit{IEEE Transactions on Aerospace and Electronic Systems}, DOI: 10.1109/TAES.2020.3022437.  Copyright \textcopyright \ 2020 IEEE. Personal use is permitted, but republication/redistribution requires IEEE permission. See \url{http://www.ieee.org/publications_standards/publications/rights/index.html} for more information.
}}

\markboth{}%
{}

\maketitle

\begin{abstract}
The utilization of extremely high frequency (EHF) bands can achieve very high throughput in satellite networks (SatNets). Nevertheless, the severe rain attenuation at EHF bands imposes strict limitations on the system availability. Smart gateway diversity (SGD) is considered indispensable in order to guarantee the required availability with reasonable cost. In this context, we examine a load-sharing SGD (LS-SGD) architecture, which has been recently proposed in the literature. For this diversity scheme, we define the system outage probability (SOP) using a rigorous probabilistic analysis based on the Poisson binomial distribution (PBD), and taking into consideration the traffic demand as well as the gateway (GW) capacity. Furthermore, we provide several methods for the exact and approximate calculation of SOP. As concerns the exact computation of SOP, a closed-form expression and an efficient algorithm based on a recursive formula are given, both with quadratic worst-case complexity in the number of GWs. Finally, the proposed approximation methods include well-known probability distributions (binomial, Poisson, normal) and a Chernoff bound. According to the numerical results, binomial and Poisson distributions are by far the most accurate approximation methods.

\end{abstract}

\begin{IEEEkeywords}
Satellite networks, EHF bands, feeder link, smart gateway diversity, outage probability, Poisson binomial distribution, recursive formula, approximation methods.
\end{IEEEkeywords}

\IEEEpeerreviewmaketitle

\section{Introduction}
\IEEEPARstart{N}{ext}-generation broadband SatNets require very high data-rates (up to 1 Tbps) that can be accomplished by utilizing EHF bands (above 30 GHz) in the feeder links. Although the frequency shift from Ka (20/30 GHz) to Q/V (40/50 GHz) or W (75-110 GHz) bands provides more spectrum, the  high levels of rain attenuation (tens of dB) cannot be tackled by the standard fade mitigation techniques (FMTs), such as uplink power control (ULPC), adaptive coding and modulation (ACM) and data rate adaptation (DRA). As a result, gateway diversity (GD) is necessary to achieve high system availability, since it is a more effective and powerful FMT (at the expense of installing additional GWs) \cite{Panagopoulos2004, Panagopoulos2005, Kourogiorgas2012, Karagiannis2012, Panagopoulos2016}. Nevertheless, the conventional GD (where the same signal is transmitted by two or three GWs) is economically prohibitive for reaching the Tbps due to the large number of required GWs \cite{Jeannin2014}. \linebreak An alternative solution to achieve high availability with reasonable cost is the smart gateway diversity (SGD), where a user beam can be served by different GWs depending on the propagation conditions and the traffic load. In particular, if a GW experiences deep fades then its traffic can be rerouted to other GWs with better propagation conditions. 

\subsection{Related Work}
In \cite{Jeannin2014}, two SGD techniques are examined, namely, the frequency multiplexing diversity and the $N + P$ diversity. The performance analysis of these schemes is based on a simple probabilistic model, assuming the same outage probability for each GW (although unusual in practice) as well as independent propagation conditions over the GW locations. Moreover, the authors in \cite{Kyrgiazos2014} study the $N$-active diversity (with time or frequency multiplexing, taking into account the spatial correlation between the GWs) and the $N + P$ diversity (where there are $N$ active plus $P$ redundant or idle GWs). In the former scheme, all the $N$ GWs are active and each user beam is served by a group of GWs, whereas in the latter scheme each user beam is served by only one GW and switches to a redundant GW in case of outage. 

A novel GW switching scheme for the $N + P$ scenario is proposed in \cite{Gharanjik2015}, using a dynamic rain attenuation model and considering two key performance indicators: the average outage probability and the average switching rate. Furthermore, a different SGD scheme, where there is no redundant GWs but each GW should have some spare capacity, is analyzed in \cite{Rossi2017}. Specifically, in nominal clear-sky conditions all GWs are active and operate using a maximum fraction of their full capacity, while if some GWs experience heavy rain attenuation then their traffic is served by the remaining GWs using their extra capacity. Finally, an extension of the well-known \linebreak $N$-active and $N + P$ diversity schemes to multiple-input-multiple-output (MIMO) architectures is presented in \cite{Delamotte2020}.

\subsection{Contribution}
The main contributions of this work, in comparison with existing approaches, are as follows:
\begin{itemize}
  \item In this paper, we analyze in detail a SGD architecture operating in \emph{load-sharing mode}, where the GWs do not necessarily have equal outage probabilities. To the best of our knowledge, the concept of LS-SGD has been firstly introduced in\cite{Rossi2017}, assuming that all GWs utilize the same fraction of their full capacity in clear-sky conditions; our analysis, however, does not make such an assumption.
  \item Unlike previous research, we present a \emph{system-level approach} taking into account the \emph{traffic demand} as well as the \emph{GW capacity}. In particular, we are interested in the \emph{system outage probability (SOP)}, defined as the probability of not satisfying the overall traffic demand, which is a \emph{stricter performance metric} than the \emph{user outage probability (UOP)}, i.e., the probability of not satisfying the traffic demand of a specific user.
  \item Furthermore, we study the performance improvement (in terms of SOP) that can be achieved by increasing the number of GWs in the LS-SGD scheme. For this purpose, we define two comparative metrics, namely, the \emph{SOP-improvement factor} and the \emph{generalized SOP-improvement factor}.
  \item In addition, exact methods for the computation of SOP are given, including a \emph{closed-form expression} and an \emph{efficient algorithm based on a recursive formula}. The worst-case complexity of both methods is quadratic in the number of GWs. 
  \item Finally, we provide some approximation methods for the estimation of SOP. More specifically, the SOP can be approximated by various probability distributions (binomial, Poisson, normal) as well as a Chernoff bound.\linebreak Ultimately, we conclude that \emph{binomial} and \emph{Poisson} distributions are the most appropriate approximation methods for SGD systems operating in EHF bands.
\end{itemize}

\subsection{Paper Organization}
The remainder of this article is organized as follows. Firstly, Section \ref{Section_II} describes and analyzes in more detail the LS-SGD architecture. Moreover, Sections \ref{Section_III} and \ref{Section_IV} present exact and approximation methods for calculating the SOP, respectively. In addition, the performance of LS-SGD as well as the accuracy of approximation methods are examined in Section \ref{Section_V}. Finally, concluding remarks are given in Section \ref{Section_VI}.

\subsection{Mathematical Notation \& Conventions}
\textit{Mathematical notation}: ${\mathbb{Z}^ + } = \{ 1,2,3, \ldots \}$, $\mathbb{Z}_0^ +  = \{ 0,1,2, \ldots \}$, $\mathcal{N} = \{ 1,2, \ldots ,N\}$ and $\mathcal{N}{_0} = \{ 0,1, \ldots ,N\}$, where $N \in {\mathbb{Z}^ + }$. Moreover, $\mathbb{P}( \cdot )$ and $\mathbb{E}( \cdot )$ denote probability and expectation, respectively. $\left\lfloor  \cdot  \right\rfloor$ and $\left\lceil  \cdot  \right\rceil$ are respectively the floor and ceiling functions. In addition, $\left| x \right|$ represents the absolute value of a real number $x$, while $\left| \mathcal{S} \right|$ stands for the cardinality of a set $\mathcal{S}$. ${{\mathbf{0}}_N}$ and ${{\mathbf{1}}_N}$ denote the $N$-dimensional all-zeros and all-ones vectors, respectively. Furthermore,\linebreak $\varphi (x) = \left( {\sqrt {2\pi } } \right)^{ - 1}{e^{ - 0.5{x^2}}}$ is the probability density function (PDF), $\Phi (x) = \int_{ - \infty }^x {\varphi (u)} du$ is the cumulative distribution function (CDF), and $Q(x) = 1 - \Phi (x)$ is the complementary CDF (CCDF) of the standard normal distribution. Finally, the \emph{total variation distance} between two (discrete) random variables (RVs) $X$ and $Y$ on $\mathbb{Z}_0^+$ is defined as follows: 
\begin{equation}
\begin{split}
{d_{{\mathrm{TV}}}}(X,Y) & = \mathop {\sup }\limits_{\mathcal{A} \subseteq \mathbb{Z}_0^ + } \left| {\mathbb{P}(X \in \mathcal{A}) - \mathbb{P}(Y \in \mathcal{A})} \right| = \\
 & = \tfrac{1}{2}\sum\limits_{m \in \mathbb{Z}_0^ + } {\left| {\mathbb{P}(X = m) - \mathbb{P}(Y = m)} \right|}
\end{split}
\end{equation}

\textit{Mathematical conventions}: $\sum\limits_{i \in \emptyset }{{a_i}} = 0$ and $\prod\limits_{i \in \emptyset }{{a_i}} = 1$.

\subsection{Preliminaries on Discrete Probability Distributions}

\subsubsection{Bernoulli Distribution}
A binary (0/1) RV follows a Bernoulli distribution with parameter $p \in [0,1]$, $X \sim {\mathrm{Bern}}(p)$, if and only if (iff) its probability mass function (PMF) is given by: $\mathbb{P}(X = 1) = 1 - \mathbb{P}(X = 0) = p$.

\subsubsection{Binomial Distribution} \label{Section_I:binomial}
A discrete (integer-valued) RV $X \sim {\mathrm{Bin}}(N,p)$, where $N \in {\mathbb{Z}^ + }$ and $p \in [0,1]$, iff its PMF is:
\begin{equation}
\mathbb{P}(X = m) = \binom{N}{m} {p^m}{(1 - p)^{N - m}}, \;\; \forall m \in \mathcal{N}{_0}
\end{equation} 
The binomial distribution is a generalization of the Bernoulli distribution, because ${\mathrm{Bin}}(1,p) \equiv {\mathrm{Bern}}(p)$. Furthermore, if ${\{ {X_n}\} _{n \in \mathcal{N}}}$ is a set of \emph{independent and identically distributed (i.i.d.)} Bernoulli RVs $({X_n} \sim {\mathrm{Bern}}(p)$, $\forall n \in \mathcal{N})$, then $S = \sum\limits_{n \in \mathcal{N}} {{X_n}}  \sim {\mathrm{Bin}}(N,p)$.

\subsubsection{Poisson Binomial Distribution} \label{Section_I:Poisson_binomial}
A discrete RV $X \sim {\mathrm{PoisBin}}({\mathbf{p}})$, where ${\mathbf{p}} = [{p_1},{p_2}, \ldots ,{p_N}] \in {[0,1]^N}$ with $N \in {\mathbb{Z}^ + }$, iff its PMF is given by:
\begin{equation}
\mathbb{P}(X = m) = \sum\limits_{\mathcal{A} \in {\mathcal{C}_m}} {\prod\limits_{i \in \mathcal{A}} {{p_i}} \prod\limits_{j \in \mathcal{N}\backslash \mathcal{A}} {(1 - {p_j})} }, \;\; \forall m \in \mathcal{N}{_0}
\end{equation}
where ${\mathcal{C}_m} = \{ \mathcal{A} \subseteq \mathcal{N}:\,\left| \mathcal{A} \right| = m\}$ (i.e., the set of all subsets of $\mathcal{N}$ having $m$ elements) with $\left| {{\mathcal{C}_m}} \right| = \binom{N}{m} = \frac{{N!}}{{m!(N - m)!}}$. The binomial distribution is a special case of the PBD, since ${\mathrm{PoisBin}}(p{{\mathbf{1}}_N}) \equiv {\mathrm{Bin}}(N,p)$. Moreover, if ${\{ {X_n}\} _{n \in \mathcal{N}}}$ is a set of \emph{independent, but not necessarily identically distributed,} Bernoulli RVs $({X_n} \sim {\mathrm{Bern}}({p_n})$, $\forall n \in \mathcal{N})$, then \linebreak $S = \sum\limits_{n \in \mathcal{N}} {{X_n}}  \sim {\mathrm{PoisBin}}({\mathbf{p}})$. 

\subsubsection{Poisson Distribution}
A discrete RV $X \sim {\mathrm{Pois}}(\lambda )$, where $\lambda \geq 0$, iff its PMF is expressed by: $\mathbb{P}(X = m) = {e^{ - \lambda }}{\lambda ^m}{(m!)^{ - 1}}$, $\forall m \in \mathbb{Z}_0^+$.

\section{Smart Gateway Diversity Architecture} \label{Section_II}
In this section, we describe and analyze a \emph{load-sharing SGD (LS-SGD)} scheme, where the unused capacity of available (not in outage) GWs can be exploited to serve the users of the remaining GWs (which are in outage). To the best of our knowledge, this SGD architecture has been firstly proposed and analyzed in\cite{Rossi2017}. Nevertheless, our approach is somewhat different, since it explicitly takes into consideration the \emph{traffic demand} as well as the \emph{GW capacity}.

\subsection{System Model}
Consider a SatNet consisting of a geostationary satellite and a ground network of $N \in {\mathbb{Z}^ +}$ (geographically distributed) GWs, which are denoted by the set $\mathcal{N} = \{ 1,2, \ldots ,N\}$. All the GWs are connected to a \emph{network control center (NCC)} through dedicated terrestrial links. The NCC performs, when necessary (in case of deep fading), the traffic switching/rerouting between the GWs.\footnote{The details on the switching/handover procedure are beyond the scope of this paper; see \cite{Jeannin2014,Gharanjik2015,Muhammad2016} for more information on this important topic.} Furthermore, the following analysis focuses on the \emph{feeder links} (data transmission from the GWs to the satellite), considering ideal (without noise and interference) satellite-user links.\footnote{As concerns the downlink of multibeam satellite systems, an energy-efficient power allocation in order to jointly minimize the unmet system capacity and the total radiated power is proposed in \cite{Efrem2020_power_allocation}.}

In addition, \emph{the distance between any two different GWs is large enough} (some hundreds of km), and thus the spatial correlation of the propagation impairments at the GW locations is extremely small \cite{Jeannin2014, Papafragkakis2019}. As a result, the rain attenuations/fades experienced by the GWs can be considered \emph{(mutually) independent}.  It is also assumed that there is no ACM, so each feeder link is either available at full capacity or completely unavailable.\footnote{Classical FMTs, such as ULPC, ACM and DRA, can tackle impairments of a few dB (e.g., gaseous absorption and cloud attenuation). However, in EHF bands these techniques alone are no longer effective, because the rain attenuation can reach tens of dB. Hence, SGD has to be  used in order to keep SOP at the required levels. In essence, due to the intense rain attenuation in EHF bands, \emph{SGD is the primary FMT}, whereas ULPC, ACM and DRA are secondary/supplementary FMTs. As a result, the absence of ACM in the analysis of SGD is quite reasonable. In any case, our approach provides a \emph{lower bound on the performance} of a more realistic system that utilizes SGD together with standard FMTs.} Therefore, the feeder links can be mathematically modeled as a set ${\{ {X_n}\}_{n \in \mathcal{N}}}$ of \emph{independent, but not necessarily identically distributed,} Bernoulli RVs $({X_n} \sim {\mathrm{Bern}}({p_n})$, $\forall n \in \mathcal{N})$, \linebreak where ${p_n} \in [0,1]$ is the outage/exceedance probability of the ${n^{{\mathrm{th}}}}$ link/GW (i.e., the probability that the rain attenuation exceeds a specific threshold); some methods for calculating ${p_n}$ are discussed in \cite{Rossi2017}. Moreover, we define the RV \linebreak ${S_\mathcal{N}} = \sum\limits_{n \in \mathcal{N}} {{X_n}}  \sim {\mathrm{PoisBin}}({{\mathbf{p}}_\mathcal{N}})$, with ${{\mathbf{p}}_\mathcal{N}} = [{p_1},{p_2}, \ldots ,{p_N}]$, which is \emph{the total number of GWs that are in outage in the set $\mathcal{N}$}.\footnote{According to Section \ref{Section_I:binomial}, if ${p_n} = p$,  $\forall n \in \mathcal{N}$ (i.i.d. Bernoulli RVs), then ${S_\mathcal{N}} \sim {\mathrm{Bin}}(N,p)$. Note that this is rarely the case in practice.}  The expectation, the standard deviation, and the 3\textsuperscript{rd} central moment of ${S_\mathcal{N}}$ are given respectively by:
\begin{equation} \label{mean}
{\mu _\mathcal{N}} = \mathbb{E}({S_\mathcal{N}}) = \sum\limits_{n \in \mathcal{N}} {{p_n}}
\end{equation}
\begin{equation} \label{standard_deviation}
{\sigma _\mathcal{N}} = \sqrt {\mathbb{E}\left( {{{({S_\mathcal{N}} - {\mu _\mathcal{N}})}^2}} \right)}  = \sqrt {\sum\limits_{n \in \mathcal{N}} {{p_n}(1 - {p_n})} }
\end{equation}
\begin{equation} \label{third_central_moment}
{\nu _\mathcal{N}} = \mathbb{E}\left( {{{({S_\mathcal{N}} - {\mu _\mathcal{N}})}^3}} \right) = \sum\limits_{n \in \mathcal{N}} {{p_n}(1 - {p_n})(1 - 2{p_n})}
\end{equation}

Note that ${\mu _\mathcal{N}} \geq \sigma _\mathcal{N}^2$, ${\mu _\mathcal{N}} \in [0,N]$, $\sigma _\mathcal{N}^2 \in [0,N/4]$, and ${\nu _\mathcal{N}} \in [ - N/(6\sqrt 3 ),N/(6\sqrt 3 )]$. 

\subsection{System Outage Probability} \label{Section_II:SOP_definition}
In the sequel, suppose that the ${n^{{\mathrm{th}}}}$ GW can offer a \emph{maximum data-rate (capacity)} $R_n^{{\mathrm{max}}} > 0$, and the \emph{total requested data-rate (traffic demand)} is $R_{{\mathrm{tot}}}^{{\mathrm{req}}} = \sum\limits_{u \in \mathcal{U}} {R_u^{{\mathrm{req}}}} > 0$, where $\mathcal{U} = \{ 1,2, \ldots ,U\}$ is the set of users and $R_u^{{\mathrm{req}}} \geq 0$ is the requested data-rate of user $u$. Moreover, the operation of NCC ensures the following \emph{load-sharing property}: all users receive their requested data-rate if and only if (iff) the overall capacity of the available (not in outage) GWs is greater than or equal to the traffic demand. Equivalently, there is at least one user that receives inadequate data-rate iff the overall capacity of the available GWs is less than the traffic demand. 

\begin{definition}[General SOP expression]
The SOP is defined as follows:
\begin{equation} \label{initial_SOP}
P_{{\mathrm{out}}}^{{\mathrm{sys}}} = \sum\limits_{\mathcal{A} \in \mathcal{F}} {\prod\limits_{i \in \mathcal{A}} {{p_i}} \prod\limits_{j \in \mathcal{N}\backslash \mathcal{A}} {(1 - {p_j})} }
\end{equation}
where $\mathcal{F} = \left\{ {\mathcal{A} \subseteq \mathcal{N}: \, \sum\limits_{j \in \mathcal{N}\backslash \mathcal{A}} {R_j^{{\mathrm{max}}}}  < R_{{\mathrm{tot}}}^{{\mathrm{req}}}} \right\}$. In other words, $\mathcal{F}$ contains all the subsets $\mathcal{A}$ of the $N$ GWs such that: if the GWs in $\mathcal{A}$ are all in outage and the remaining GWs in $\mathcal{N}\backslash \mathcal{A}$ are all available (not in outage), then the traffic demand cannot be satisfied by the latter group of GWs. In essence, the SOP expresses the probability of not satisfying the traffic demand of all users (or, equivalently, the probability that there is at least one user that receives inadequate data-rate). Similarly, we can define the \emph{system availability (SA)} as the probability of the complementary event: $P_{{\mathrm{avail}}}^{{\mathrm{sys}}} = 1 - P_{{\mathrm{out}}}^{{\mathrm{sys}}}$.
\end{definition}

For simplicity, we assume that \emph{all GWs have the same capacity, $R_{{\mathrm{GW}}}^{{\mathrm{max}}} > 0$, in the rest of the paper}; this is not such a strong assumption in practice, since the same frequency band is fully reused in each feeder link and the clear-sky link budget is almost identical for all GWs. 

\begin{theorem}[Special SOP expression]
Suppose that all GWs have the same capacity, i.e., $R_n^{{\mathrm{max}}} = R_{{\mathrm{GW}}}^{{\mathrm{max}}} > 0$, $\forall n \in \mathcal{N}$. Then, \eqref{initial_SOP} reduces to the following expression:
\begin{equation}\label{SOP}
P_{{\mathrm{out}}}^{{\mathrm{sys}}} = P_{{\mathrm{out}}}^{{\mathrm{sys}}}(L,N) = \sum\limits_{m = L}^N {\sum\limits_{\mathcal{A} \in {\mathcal{C}_m}} {\prod\limits_{i \in \mathcal{A}}{{p_i}} \prod\limits_{j \in \mathcal{N}\backslash \mathcal{A}} {(1 - {p_j})}}}
\end{equation}
where ${\mathcal{C}_m} = \{ \mathcal{A} \subseteq \mathcal{N}: \, \left| \mathcal{A} \right| = m\}$ and $L$ is given by:
\begin{equation}\label{L_definition}
L = N - \left\lceil r \right\rceil + 1
\end{equation}
where $r > 0$ is \emph{the ratio of the traffic demand to the GW capacity}, that is:
\begin{equation}\label{ratio}
r = {R_{\mathrm{tot}}^{\mathrm{req}}}/{R_{\mathrm{GW}}^{\mathrm{max}}} 
\end{equation}
\end{theorem}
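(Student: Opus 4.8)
The plan is to exploit the fact that, under the equal-capacity assumption, membership of a subset $\mathcal{A}$ in $\mathcal{F}$ depends only on its cardinality $m = |\mathcal{A}|$ and not on which particular GWs it contains. First I would fix an arbitrary $\mathcal{A} \subseteq \mathcal{N}$ with $|\mathcal{A}| = m$; then $|\mathcal{N}\backslash\mathcal{A}| = N - m$, and since every available GW contributes the same capacity $R_{\mathrm{GW}}^{\mathrm{max}}$, the aggregate capacity of the available GWs is $\sum_{j \in \mathcal{N}\backslash\mathcal{A}} R_j^{\mathrm{max}} = (N-m)R_{\mathrm{GW}}^{\mathrm{max}}$. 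Substituting this into the defining condition of $\mathcal{F}$ from \eqref{initial_SOP}, the outage condition $\sum_{j \in \mathcal{N}\backslash\mathcal{A}} R_j^{\mathrm{max}} < R_{\mathrm{tot}}^{\mathrm{req}}$ becomes $(N-m)R_{\mathrm{GW}}^{\mathrm{max}} < R_{\mathrm{tot}}^{\mathrm{req}}$. Dividing both sides by $R_{\mathrm{GW}}^{\mathrm{max}} > 0$ and invoking \eqref{ratio}, this is equivalent to $N - m < r$, i.e., $m > N - r$.

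Next I would translate the strict inequality $m > N - r$ into an integer threshold. Because $m \in \mathcal{N}_0$ is an integer, $m > N - r$ holds iff $m \geq \lfloor N - r \rfloor + 1$. I would then simplify the threshold using $\lfloor N - r \rfloor = N + \lfloor -r \rfloor = N - \lceil r \rceil$, which is valid since $N \in \mathbb{Z}^+$ and $\lfloor -r \rfloor = -\lceil r \rceil$; this yields precisely $m \geq L$ with $L = N - \lceil r \rceil + 1$, matching \eqref{L_definition}. Consequently $\mathcal{A} \in \mathcal{F}$ iff $|\mathcal{A}| \geq L$, so $\mathcal{F} = \bigcup_{m=L}^{N} \mathcal{C}_m$, a disjoint union over the admissible cardinalities.

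Finally I would substitute this decomposition back into \eqref{initial_SOP}. Since the sets $\mathcal{C}_m$ are pairwise disjoint, the single sum over $\mathcal{A} \in \mathcal{F}$ factors into an outer sum over $m$ from $L$ to $N$ and an inner sum over $\mathcal{A} \in \mathcal{C}_m$, which is exactly \eqref{SOP}.

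The routine but delicate part is the floor/ceiling bookkeeping: one must respect that the inequality is strict ($<$, hence $>$ after rearranging), so that the smallest admissible integer is $\lfloor N - r \rfloor + 1$ rather than $\lceil N - r \rceil$ (these differ precisely when $r$ is an integer). It is also worth checking the boundary behavior for consistency: when $r \leq 1$ one gets $L = N$, so only the all-outage event contributes, whereas when $r > N$ one gets $L \leq 0$, in which case every $\mathcal{C}_m$ with $0 \leq m \leq N$ is included and the formula correctly returns $P_{\mathrm{out}}^{\mathrm{sys}} = 1$ (the demand cannot be met even with all GWs available). These degenerate cases cause no difficulty, since $\mathcal{C}_m = \emptyset$ for $m < 0$ and any spurious terms vanish.
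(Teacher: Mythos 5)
Your proof is correct and follows essentially the same route as the paper's: reduce membership in $\mathcal{F}$ to a cardinality condition via the equal-capacity assumption, convert the strict inequality $N - |\mathcal{A}| < r$ into the integer threshold $|\mathcal{A}| \geq N - \left\lceil r \right\rceil + 1 = L$, and decompose $\mathcal{F}$ as the disjoint union $\bigcup_{m=L}^{N}\mathcal{C}_m$ inside \eqref{initial_SOP}. The only cosmetic difference is your floor/ceiling bookkeeping (passing through $\left\lfloor N-r \right\rfloor + 1$ and the identity $\left\lfloor -r \right\rfloor = -\left\lceil r \right\rceil$) where the paper argues directly that an integer $k$ satisfies $k < r$ iff $k \leq \left\lceil r \right\rceil - 1$; your added boundary checks are sound and consistent with the paper's surrounding remarks.
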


\begin{proof}
Under the condition of equal GW capacities, we have that $\mathcal{F} = \left\{ {\mathcal{A} \subseteq \mathcal{N}: \, (N - \left| \mathcal{A} \right|)R_{{\mathrm{GW}}}^{{\mathrm{max}}} < R_{{\mathrm{tot}}}^{{\mathrm{req}}}} \right\}$. In addition, $(N - \left| \mathcal{A} \right|)R_{{\mathrm{GW}}}^{{\mathrm{max}}} < R_{{\mathrm{tot}}}^{{\mathrm{req}}}$ $\Leftrightarrow$ $N - \left| \mathcal{A} \right| < r$ $\Leftrightarrow$ $N - \left| \mathcal{A} \right| < \left\lceil r \right\rceil$ $\Leftrightarrow$ $N - \left| \mathcal{A} \right| \leq \left\lceil r \right\rceil - 1$ $\Leftrightarrow$ $\left| \mathcal{A} \right| \geq N - \left\lceil r \right\rceil + 1$. Consequently, $\mathcal{F} = \left\{ {\mathcal{A} \subseteq \mathcal{N}: \, \left| \mathcal{A} \right| \geq L} \right\} = \bigcup\limits_{m = L}^N {{\mathcal{C}_m}}$ and then \eqref{SOP} follows immediately from \eqref{initial_SOP}.
\end{proof}

\begin{remark} \label{rmrk_1}
According to Section \ref{Section_I:Poisson_binomial}, $P_{{\mathrm{out}}}^{{\mathrm{sys}}}(L,N) = \sum\limits_{m = L}^N {\mathbb{P}({S_\mathcal{N}} = m)}  = \mathbb{P}({S_\mathcal{N}} \geq L)$, i.e., the SOP is \emph{the probability of having at least $L$ out of $N$ GWs in outage}.\footnote{Similar formula is also given in \cite{Rossi2017} and \cite{Rossi2020}, however, without explicit dependence on the traffic demand and the GW capacity. Herein, this dependence is clearly expressed by \eqref{L_definition} and \eqref{ratio}. Note that this SOP definition is a generalization of \emph{the classical SOP} (i.e., the probability of having all GWs in outage), which is obtained when $\left\lceil r \right\rceil = 1$ $\Rightarrow$ $L = N$ $\Rightarrow$ $P_{{\mathrm{out}}}^{{\mathrm{sys}}} = \prod\limits_{n \in \mathcal{N}}{p_n}$; the classical SOP is used in\cite{Efrem2020_site_diversity} to select the (globally) minimum number of GWs satisfying SOP-requirements.}
\end{remark}

Although in general $L \in {\mathcal{N}_0}$, for the diversity system under consideration $L \in \mathcal{N}$ due to the fact that $\left\lceil r \right\rceil  \in \mathcal{N}$, since a) $r > 0$ $\Leftrightarrow$ $\left\lceil r \right\rceil \geq 1$, and b) $N R_{{\mathrm{GW}}}^{{\mathrm{max}}} \geq R_{{\mathrm{tot}}}^{{\mathrm{req}}}$ $\Leftrightarrow$ $N \geq r$ $\Leftrightarrow$ $N \geq \left\lceil r \right\rceil$ (note that ${N_{\min }} = \left\lceil r \right\rceil$ is \emph{the minimum required number of GWs}). Finally, we provide a result about the monotonicity of SOP.

\begin{proposition}[SOP monotonicity] \label{prop_1}
For a given set $\mathcal{N}$ of GWs, the SOP is an increasing function of $r$.
\end{proposition}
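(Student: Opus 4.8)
The plan is to reduce everything to the tail-probability characterization established in Remark \ref{rmrk_1}, namely $P_{{\mathrm{out}}}^{{\mathrm{sys}}} = \mathbb{P}({S_\mathcal{N}} \geq L)$ with $L = N - \left\lceil r \right\rceil + 1$. The key observation is that, for a \emph{fixed} set $\mathcal{N}$, the random variable ${S_\mathcal{N}} \sim {\mathrm{PoisBin}}({{\mathbf{p}}_\mathcal{N}})$ and its distribution do not depend on $r$ at all; the entire dependence on $r$ enters only through the integer threshold $L$. Hence it suffices to show that the map $r \mapsto \mathbb{P}({S_\mathcal{N}} \geq L)$ is non-decreasing, and I would do this by factoring it into two elementary monotone pieces.

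First I would show that $L$ is a non-increasing function of $r$. Since the ceiling function $\left\lceil \cdot \right\rceil$ is non-decreasing, increasing $r$ can only increase (or leave unchanged) $\left\lceil r \right\rceil$, and therefore by \eqref{L_definition} can only decrease (or leave unchanged) $L = N - \left\lceil r \right\rceil + 1$. Second, I would argue that the survival probability $\mathbb{P}({S_\mathcal{N}} \geq L)$ is a non-increasing function of the threshold $L$: for any integers $L_1 \leq L_2$ the events nest as $\{ {S_\mathcal{N}} \geq L_2 \} \subseteq \{ {S_\mathcal{N}} \geq L_1 \}$, so monotonicity of the probability measure gives $\mathbb{P}({S_\mathcal{N}} \geq L_2) \leq \mathbb{P}({S_\mathcal{N}} \geq L_1)$. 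Composing the two facts, as $r$ increases $L$ decreases (weakly) and the survival probability correspondingly increases (weakly), which yields that $P_{{\mathrm{out}}}^{{\mathrm{sys}}}$ is a non-decreasing function of $r$.

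There is no real analytic obstacle here, since the argument is just a composition of two monotone maps. The one subtlety I would be careful to flag is the \emph{sense} of monotonicity: because $\left\lceil r \right\rceil$ changes only at integer values, $L$ is constant on each interval $(k-1, k]$ for $k \in {\mathbb{Z}^+}$, so $P_{{\mathrm{out}}}^{{\mathrm{sys}}}$ is in fact a right-continuous step function of $r$, piecewise constant and jumping only when $\left\lceil r \right\rceil$ increments. Consequently ``increasing'' must be understood in the weak (non-decreasing) sense rather than the strict one; indeed, whenever $L$ drops by one the SOP increases by exactly $\mathbb{P}({S_\mathcal{N}} = L-1) \geq 0$, which can be zero in degenerate cases (e.g.\ when some $p_n \in \{0,1\}$). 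I would state this explicitly so the proposition is not misread as asserting strict monotonicity.
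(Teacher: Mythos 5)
Your proof is correct and takes essentially the same route as the paper, whose entire proof is the chain $r_1 \geq r_2 \Rightarrow \lceil r_1 \rceil \geq \lceil r_2 \rceil \Rightarrow L_1 \leq L_2 \Rightarrow P_{\mathrm{out}}^{\mathrm{sys}}(L_1,N) \geq P_{\mathrm{out}}^{\mathrm{sys}}(L_2,N)$; your nested-events justification of the last implication and your caveat that ``increasing'' means non-decreasing merely make explicit what the paper leaves implicit. One minor slip in your aside (which does not affect the argument): a step function that is constant on intervals of the form $(k-1,k]$ is \emph{left}-continuous, not right-continuous.
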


\begin{proof}
Let ${r_1} \geq {r_2}$ $\Rightarrow$ $\left\lceil {{r_1}} \right\rceil  \geq \left\lceil {{r_2}} \right\rceil$ $\Rightarrow$ ${L_1} \leq {L_2}$ $\Rightarrow$ $P_{{\mathrm{out}}}^{{\mathrm{sys}}}({L_1},N) \geq P_{{\mathrm{out}}}^{{\mathrm{sys}}}({L_2},N)$.
\end{proof}

\subsection{SOP-Improvement Factor} \label{Section_II:SOP_improvement_factor}
Subsequently, we study the performance improvement (in terms of SOP) achieved by an $N$-GW diversity system in comparison with a single-GW system. 

\begin{definition}[SOP-improvement factor]
Assuming the same $\left\lceil r \right\rceil = 1$ and that $P_{{\mathrm{out}}}^{{\mathrm{sys}}}(N,N) > 0$, the SOP-improvement factor is defined as follows:
\begin{equation}\label{improvement_factor}
I = \frac{{P_{{\mathrm{out}}}^{{\mathrm{sys}}}(1,1)}}{{P_{{\mathrm{out}}}^{{\mathrm{sys}}}(N,N)}} = \frac{{{p_1}}}{{\prod\limits_{n \in \mathcal{N}} {{p_n}} }} = {\left( {\prod\limits_{n = 2}^N {{p_n}} } \right)^{ - 1}} 
\end{equation}
Obviously, it holds that $I \geq 1$. 
\end{definition}

Next, consider a diversity system with $N + K$ GWs $(K \in \mathbb{Z}_0^+)$ all of which have the same capacity $R_{{\mathrm{GW}}}^{{\mathrm{max}}} > 0$, and $\left\lceil r \right\rceil  \in \mathcal{N}$ $($since $1 \leq \left\lceil r \right\rceil  \leq \min (N,N + K) = N)$. Furthermore, let $\mathcal{K} = \{ N + 1,N + 2, \ldots ,N + K\}$ be the set of additional GWs, and ${{\mathbf{p}}_{\mathcal{N} \cup \mathcal{K}}} = [{{\mathbf{p}}_\mathcal{N}},{{\mathbf{p}}_\mathcal{K}}] = [{p_1},{p_2}, \ldots ,{p_{N + K}}]$ be the vector of GW outage probabilities, where ${{\bf{p}}_\mathcal{K}} = [{p_{N + 1}},{p_{N + 2}}, \ldots ,{p_{N + K}}]$. Suppose also that ${\{ {X_i}\} _{i \in \mathcal{N} \cup \mathcal{K}}}$ is a set of \emph{independent, but not necessarily identically distributed,} Bernoulli RVs $({X_i} \sim {\mathrm{Bern}}({p_i})$, $\forall i \in \mathcal{N} \cup \mathcal{K})$. Besides ${S_\mathcal{N}}$, we define the RVs ${S_\mathcal{K}} = \sum\limits_{k \in \mathcal{K}} {{X_k}}  \sim {\mathrm{PoisBin}}({{\bf{p}}_\mathcal{K}})$ and ${S_{\mathcal{N} \cup \mathcal{K}}} = \sum\limits_{i \in \mathcal{N} \cup \mathcal{K}} {{X_i}}  = {S_\mathcal{N}} + {S_\mathcal{K}} \sim {\mathrm{PoisBin}}({{\mathbf{p}}_{\mathcal{N} \cup \mathcal{K}}})$ denoting the total number of GWs which are in outage in the sets $\mathcal{K}$ and $\mathcal{N} \cup \mathcal{K}$, respectively. For this diversity system $L' = N + K - \left\lceil r \right\rceil  + 1 = L + K$, with $L' \in \{ K + 1,K + 2, \ldots ,K + N\}$.  

\begin{proposition}[SOP reduction] \label{prop_2}
Let $P_{{\mathrm{out}}}^\mathcal{N} = P_{{\mathrm{out}}}^{{\mathrm{sys}}}(L,N) = \mathbb{P}({S_\mathcal{N}} \geq L)$ and $P_{{\mathrm{out}}}^{\mathcal{N} \cup \mathcal{K}} = P_{{\mathrm{out}}}^{{\mathrm{sys}}}(L',N + K) = \mathbb{P}({S_{\mathcal{N} \cup \mathcal{K}}} \geq L')$ stand for the SOP of the $N$-GW and $(N + K)$-GW diversity systems, respectively. Then, it holds that $P_{{\mathrm{out}}}^{\mathcal{N} \cup \mathcal{K}} \leq P_{{\mathrm{out}}}^\mathcal{N}$.
\end{proposition}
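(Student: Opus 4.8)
The plan is to prove the inequality by a direct event-inclusion argument, without ever touching the Poisson binomial PMF explicitly. The structural facts I would lean on are already in place: the additive decomposition $S_{\mathcal{N} \cup \mathcal{K}} = S_\mathcal{N} + S_\mathcal{K}$, and the threshold relation $L' = L + K$ recorded just before the statement. The one extra ingredient I would introduce is the deterministic bound $0 \leq S_\mathcal{K} \leq K$, which holds because $S_\mathcal{K} = \sum_{k \in \mathcal{K}} {X_k}$ is a sum of $K$ Bernoulli (0/1) variables. This single inequality carries the entire proof.

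Next I would establish the event inclusion $\{ {S_{\mathcal{N} \cup \mathcal{K}}} \geq L' \} \subseteq \{ {S_\mathcal{N}} \geq L \}$. Working pathwise, on the event $\{ {S_{\mathcal{N} \cup \mathcal{K}}} \geq L' \}$ I rewrite $S_\mathcal{N} = S_{\mathcal{N} \cup \mathcal{K}} - S_\mathcal{K}$ and bound from below: $S_\mathcal{N} \geq L' - S_\mathcal{K} \geq L' - K = L$, where the first step uses $S_{\mathcal{N} \cup \mathcal{K}} \geq L'$, the second uses $S_\mathcal{K} \leq K$, and the last uses $L' = L + K$. Thus every outcome satisfying the $(N+K)$-system outage condition also satisfies the $N$-system one. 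Taking probabilities and invoking monotonicity of probability under set inclusion then yields $P_{{\mathrm{out}}}^{\mathcal{N} \cup \mathcal{K}} = \mathbb{P}({S_{\mathcal{N} \cup \mathcal{K}}} \geq L') \leq \mathbb{P}({S_\mathcal{N}} \geq L) = P_{{\mathrm{out}}}^\mathcal{N}$, which is the claim.

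I do not expect a genuine obstacle here: the proposition is at heart the observation that enlarging the gateway set by $K$ units raises the outage threshold by exactly $K$, which precisely offsets the largest possible increase (namely $+K$) in the number of GWs in outage contributed by the $K$ new gateways. The only point demanding care is that the bound on $S_\mathcal{K}$ must be used in its worst case ($S_\mathcal{K} = K$) rather than in expectation — the argument is necessarily \emph{pathwise}, not an averaging argument — and that the two thresholds be matched correctly through $L' = L + K$. Notably, the conclusion holds regardless of the outage probabilities $\mathbf{p}_\mathcal{K}$ of the additional gateways, which is why no assumption on them is needed.
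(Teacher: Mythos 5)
Your proof is correct, and it takes a genuinely different route from the paper's. The paper proves the inequality by the law of total probability: it conditions on $S_\mathcal{K} = j$, uses the independence of $S_\mathcal{N}$ and $S_\mathcal{K}$ to write $\mathbb{P}(S_\mathcal{N} + S_\mathcal{K} \geq L + K \mid S_\mathcal{K} = j) = \mathbb{P}(S_\mathcal{N} \geq L + K - j)$, splits each such term as $\mathbb{P}(S_\mathcal{N} \geq L) - \mathbb{P}(L \leq S_\mathcal{N} \leq L + K - j - 1)$, and then drops the nonnegative deficit terms before summing. Your event-inclusion argument $\{ S_{\mathcal{N} \cup \mathcal{K}} \geq L' \} \subseteq \{ S_\mathcal{N} \geq L \}$, driven solely by the deterministic bound $S_\mathcal{K} \leq K$ and the threshold match $L' = L + K$, is shorter and strictly more general: it never uses the independence of the gateway outage indicators (nor any distributional assumption on $\mathbf{p}_\mathcal{K}$), whereas the paper's conditioning step does rely on the independence of $S_\mathcal{N}$ and $S_\mathcal{K}$. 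So your proof shows the proposition holds even for arbitrarily correlated rain fades, a strengthening the paper's argument does not give. What the paper's longer route buys in exchange is an explicit expression for the gap: the discarded terms $\sum_{j} \mathbb{P}(S_\mathcal{K} = j)\,\mathbb{P}(L \leq S_\mathcal{N} \leq L + K - j - 1)$ quantify exactly how much smaller $P_{\mathrm{out}}^{\mathcal{N} \cup \mathcal{K}}$ is than $P_{\mathrm{out}}^{\mathcal{N}}$, which is the quantity underlying the generalized SOP-improvement factor $I_{\mathrm{g}}$ studied immediately after; your set-inclusion argument, being purely qualitative, yields no such refinement.
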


\begin{proof}
See Appendix \ref{appndx_A}.
\end{proof}

In view of this fact, we can generalize the definition of SOP-improvement factor. 

\begin{definition}[Generalized SOP-improvement factor]
Assuming the same $\left\lceil r \right\rceil  \in \mathcal{N}$ and that $P_{{\mathrm{out}}}^{\mathcal{N} \cup \mathcal{K}} > 0$, we define the generalized SOP-improvement factor of the $(N + K)$-GW over the $N$-GW diversity system as follows:\footnote{The generalized SOP-improvement factor $I_{\mathrm{g}}$ can be estimated using the approximation methods provided in Section \ref{Section_IV}.} 
\begin{equation}\label{generalized_improvement_factor}
I_{\mathrm{g}} = \frac{{P_{{\mathrm{out}}}^\mathcal{N}}}{{P_{{\mathrm{out}}}^{\mathcal{N} \cup \mathcal{K}}}} = {\left. {\frac{{P_{{\mathrm{out}}}^{{\mathrm{sys}}}(L,N)}}{{P_{{\mathrm{out}}}^{{\mathrm{sys}}}(L + K,N + K)}}} \right|_{L = N - \left\lceil r \right\rceil  + 1}} 
\end{equation}
According to Proposition \ref{prop_2}, we have that $I_{\mathrm{g}} \geq 1$.
\end{definition} 

Notice that by setting $N = 1$ and $K = N' - 1$ $($thus $\left\lceil r \right\rceil  = 1$ and $L = 1)$, we obtain $I_{\mathrm{g}} = \frac{{P_{{\mathrm{out}}}^{{\mathrm{sys}}}(1,1)}}{{P_{{\mathrm{out}}}^{{\mathrm{sys}}}(N',N')}} = I$. Finally, we would like to emphasize that by increasing the number of GWs the SOP decreases, but higher GW connectivity is required; such connectivity issues are very important in the design and optimization of SatNets\cite{Trufero2014}. In other words, there is a trade-off between performance improvement and connectivity complexity.

\section{Exact Methods for Computing SOP} \label{Section_III}
In the sequel, several techniques for the exact computation of SOP are presented. The time complexity of these methods is summarized in Table \ref{Table_1}.

\begin{table}[!t]
\caption{Complexity Comparison Between Exact Methods}
\centering
\renewcommand{\arraystretch}{2.3}
\scalebox{0.88}{\begin{tabular}{|c|c|c|c|c|}
\hline
\makecell{\textbf{Exact} \\ \textbf{Method}} & \makecell{Direct \\ Computation} & CFE & \makecell{RF \\ (Algorithm 1)} & \makecell{FFT-based \\ Algorithm \cite{Belfore1995}} \\ \hline
\makecell{\textbf{Time} \\ \textbf{Complexity}} & $O({2^N}N)$ & $\Theta({N^2})$ & \makecell{$\Theta (L(N - L + 1))$ \\ $= O({N^2})$} & $O(N{(\log N)^2})$ \\ \hline
\end{tabular}}
\label{Table_1}
\end{table}

\subsection{Direct Computation}
The direct computation of SOP is based on the analytic formula \eqref{SOP}, which requires $\sum\limits_{m = L}^N {\left| {{\mathcal{C}_m}} \right|N} = N\sum\limits_{m = L}^N {\binom{N}{m}} \leq N\sum\limits_{m = 0}^N {\binom{N}{m}} = {2^N}N = O({2^N}N)$ arithmetic operations. Because of its exponential worst-case complexity, this method is practicable only for very small $N$.

\subsection{Closed-Form Expression}
According to \cite{Fernandez2010}, the SOP can be calculated, using polynomial interpolation and discrete Fourier transform (DFT), by the following \emph{closed-form expression (CFE)}: 
\begin{equation}\label{CFE}
\scalebox{0.9}{$P_{{\mathrm{out}}}^{{\mathrm{sys}}}(L,N) = 1 - \tfrac{1}{{N + 1}}\left( {L + \sum\limits_{n \in \mathcal{N}} {\tfrac{{1 - {c^{ - nL}}}}{{1 - {c^{ - n}}}}\prod\limits_{m \in \mathcal{N}} {\left(1 + ({c^n} - 1){p_m}\right)} } } \right)$}
\end{equation}
where $c = {e^{j2\pi /(N + 1)}}$, with $j = \sqrt { - 1}$ being the imaginary unit. It is interesting to note that the CFE comprises a sum of complex numbers, but the overall outcome is a real number in $[0,1]$. The same formula is also derived in \cite{Hong2013}, using the characteristic function of the PBD as well as the DFT. Furthermore, the computational complexity of \eqref{CFE} is $\Theta({N^2})$.

\subsection{Recursive Formula}
In this part, we explore the power and beauty of recursion. 

\begin{theorem}[SOP recursive formula] \label{thrm_RF}
The SOP is given by the following \textit{recursive formula (RF)}:
\begin{equation}\label{RF}
\scalebox{0.93}{$P_{{\mathrm{out}}}^{{\mathrm{sys}}}(L,N) = (1 - {p_N})P_{{\mathrm{out}}}^{{\mathrm{sys}}}(L,N - 1) + {p_N}P_{{\mathrm{out}}}^{{\mathrm{sys}}}(L - 1,N - 1)$}
\end{equation}
with initial/boundary conditions: a) $P_{{\mathrm{out}}}^{{\mathrm{sys}}}(0,N) = 1$ and b) $P_{{\mathrm{out}}}^{{\mathrm{sys}}}(N + 1,N) = 0$, $\forall N \in {\mathbb{Z}^ + }$.
\end{theorem}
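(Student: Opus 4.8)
\textit{Proof proposal.} The plan is to establish the recurrence \eqref{RF} by exploiting the probabilistic reading of the SOP given in Remark \ref{rmrk_1}, namely $P_{\mathrm{out}}^{\mathrm{sys}}(L,N) = \mathbb{P}(S_\mathcal{N} \geq L)$, and then to apply the law of total probability by conditioning on the outage state of the last gateway. The guiding idea is that the Poisson binomial tail over the full index set $\mathcal{N}$ can be peeled off one Bernoulli summand at a time, which is precisely what a recursion on $N$ expresses.

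First I would isolate the contribution of the $N$-th gateway by writing $S_\mathcal{N} = S' + X_N$, where $S' = \sum_{n \in \mathcal{N}\backslash\{N\}} X_n \sim \mathrm{PoisBin}([p_1,\ldots,p_{N-1}])$ counts the outages among the first $N-1$ gateways; by Remark \ref{rmrk_1} this gives $\mathbb{P}(S' \geq \ell) = P_{\mathrm{out}}^{\mathrm{sys}}(\ell, N-1)$. Since $X_N \sim \mathrm{Bern}(p_N)$ is independent of $S'$, conditioning on the two possible values of $X_N$ yields
\begin{equation*}
\mathbb{P}(S_\mathcal{N} \geq L) = (1-p_N)\,\mathbb{P}(S' \geq L) + p_N\,\mathbb{P}(S' \geq L-1),
\end{equation*}
because on the event $\{X_N = 0\}$ we have $\{S_\mathcal{N} \geq L\} = \{S' \geq L\}$, whereas on $\{X_N = 1\}$ the threshold drops by one (the $N$-th gateway already supplies one outage), so $\{S_\mathcal{N} \geq L\} = \{S' \geq L-1\}$. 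Replacing the two tails by $P_{\mathrm{out}}^{\mathrm{sys}}(L,N-1)$ and $P_{\mathrm{out}}^{\mathrm{sys}}(L-1,N-1)$ delivers \eqref{RF} directly. The boundary conditions then follow at once from the deterministic support $0 \leq S_\mathcal{N} \leq N$: the event $\{S_\mathcal{N} \geq 0\}$ is certain, giving $P_{\mathrm{out}}^{\mathrm{sys}}(0,N) = 1$, while $\{S_\mathcal{N} \geq N+1\}$ is impossible, giving $P_{\mathrm{out}}^{\mathrm{sys}}(N+1,N) = 0$.

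I do not anticipate a genuine obstacle, since the argument is a clean instance of conditioning on a single independent summand. The only point demanding care is the bookkeeping of the threshold shift, i.e., correctly identifying that $\{S_\mathcal{N} \geq L\} \cap \{X_N = 1\} = \{S' \geq L-1\} \cap \{X_N = 1\}$, together with checking that the two stated boundary conditions make the recursion terminate consistently (for instance, when $L = N$ the term $P_{\mathrm{out}}^{\mathrm{sys}}(N,N-1)$ equals $P_{\mathrm{out}}^{\mathrm{sys}}((N-1)+1,N-1) = 0$ by condition~b). An equivalent route, for anyone preferring to avoid probabilistic language, is to start from the combinatorial definition \eqref{SOP} and split each admissible subset $\mathcal{A}$ according to whether $N \in \mathcal{A}$: factoring $(1-p_N)$ out of the subsets avoiding $N$ and $p_N$ out of those containing $N$ reproduces the same recurrence after re-indexing the subset-size sum.
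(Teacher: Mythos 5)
Your proposal is correct and follows essentially the same route as the paper's proof: conditioning on the outage state of the $N$-th gateway via the law of total probability, using independence to identify the residual count $S' = S_\mathcal{N} - X_N$ as $\mathrm{PoisBin}([p_1,\ldots,p_{N-1}])$, and reading off the two tail probabilities as $P_{\mathrm{out}}^{\mathrm{sys}}(L,N-1)$ and $P_{\mathrm{out}}^{\mathrm{sys}}(L-1,N-1)$. Your treatment of the boundary conditions (via the support $0 \leq S_\mathcal{N} \leq N$) is slightly more explicit than the paper's, which simply declares them trivially true, but the substance is identical.
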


\begin{proof}
See Appendix \ref{appndx_B}. 
\end{proof}

It can be verified, using mathematical induction, that \eqref{SOP} is the solution of \eqref{RF}. To the best of our knowledge, this RF is derived for the first time in \cite{Rushdi1986}, making use of symmetric switching functions. Our proof, however, is much simpler. 

Algorithm 1 presents an efficient method to compute the SOP using the RF, which follows directly from the algorithm given in \cite{Rushdi1986}. The \emph{time complexity} of Algorithm 1 is \linebreak $\Theta (L(N - L + 1)) = O({N^2})$, with best-case complexity $\Theta (1)$ for $L = 0$, and worst-case complexity $\Theta ({N^2})$ for $L = \left\lfloor  {N/2}  \right\rfloor$ and $L = \left\lceil  {N/2}  \right\rceil$. Moreover, notice that the complexity is $\Theta (N)$ for $L = 1$ and $L = N$. As a result, Algorithm 1 has lower complexity in some cases than the CFE which requires $\Theta ({N^2})$ operations regardless of $L$. Finally, the \emph{space complexity} of Algorithm 1 is $\Theta (N + L) = \Theta (N)$.

\begin{table}[!t]
\centering
\renewcommand{\arraystretch}{1.35}
\begin{tabular*}{\columnwidth}{@{}l@{}}
\hline
\textbf{\normalsize{Algorithm 1}} \normalsize{Exact computation of SOP}
\\ \hline
\textbf{Input:} $N \in {\mathbb{Z}^ + }$, $L \in {\mathcal{N}_0}$, and ${\mathbf{p}} = [{p_1},{p_2}, \ldots ,{p_N}] \in {[0,1]^N}$ \\
\textbf{Output:} $P_{{\mathrm{out}}}^{{\mathrm{sys}}} = P_{{\mathrm{out}}}^{{\mathrm{sys}}}(L,N)$ \\
\begin{tabular}{@{}r@{~}l@{}}
1: & $D \coloneqq N - L$, $M \coloneqq L + 1$, ${\boldsymbol{\alpha }} \coloneqq {{\mathbf{0}}_M}$, ${\alpha _1} \coloneqq 1$, $\ell \coloneqq 1$ \\
2: & \textbf{for} $i \coloneqq 1$ \textbf{to} $N$ \textbf{step} $+1$ \textbf{do} \\
3: & ~~~$h \coloneqq i$  \\ 
4: & ~~~\textbf{if} $i > D + 1$ \textbf{then} $\ell \coloneqq i - D$ \textbf{end if}  \\
5: & ~~~\textbf{if} $i > L$ \textbf{then} $h \coloneqq L$ \textbf{end if} \\
6: & ~~~\textbf{for} $j \coloneqq h$ \textbf{to} $\ell$ \textbf{step} $-1$ \textbf{do}   \hfill $\triangleright$ \textit{$h$,$\ell$: high/low index} \\
7: & ~~~~~~${\alpha _{j + 1}} \coloneqq (1 - {p_i}) \cdot {\alpha _{j + 1}} + {p_i} \cdot {\alpha _j}$ \\
8: & ~~~\textbf{end for} \\
9: & \textbf{end for} \\
10: & $P_{{\mathrm{out}}}^{{\mathrm{sys}}} \coloneqq {\alpha _M}$ \\    
\end{tabular}
\\ \hline
\end{tabular*}
\end{table}

\subsection{FFT-based Algorithm}
An even more efficient and advanced algorithm for computing the SOP is provided in \cite{Belfore1995}. This method recursively applies the fast Fourier transform (FFT) to compute \emph{generating function (GF)} products, thus achieving an overall complexity of $O(N{(\log N)^2})$. 

In particular, the PMF of ${S_\mathcal{N}}  \sim {\mathrm{PoisBin}}({{\mathbf{p}}_\mathcal{N}})$ can be written in the following form:
\begin{equation}
\begin{gathered}
\left[\mathbb{P}(S_\mathcal{N}=0) \;\; \mathbb{P}(S_\mathcal{N}=1) \;\; \cdots \;\; \mathbb{P}(S_\mathcal{N}=N) \right] = \hfill \\
= \left[ {q_1} \;\; {p_1} \right] * \left[ {q_2} \;\; {p_2} \right] * \cdots * \left[ {q_N} \;\; {p_N} \right]  \hfill \\
\end{gathered}
\end{equation}
where $*$ stands for the \emph{convolution} operation and ${q_n} = 1-{p_n}$, $\forall n \in \mathcal{N}$. In addition, the GF of the Poisson-binomial PMF is given by:
\begin{equation}
\begin{split}
g(z) & = \sum\limits_{n \in {\mathcal{N}_0}} {\mathbb{P}(S_\mathcal{N}=n) \, z^n} = \prod\limits_{n \in \mathcal{N}} {({q_n} + {p_n} z)} =  \\
& = g_{\pi}  \prod\limits_{n \in \mathcal{N}} {(1 + {a_n} z)} =  g_{\pi} \left(1 + A(z)\right) 
\end{split}
\end{equation}
where $g_{\pi} = \prod\limits_{n \in \mathcal{N}} {q_n} $ and $a_{n} = {p_n} / {q_n} $, $\forall n \in \mathcal{N}$. Obviously, the SOP is simply the sum of the coefficients of $z^m$ from $m=L$ to $N$ (see Remark \ref{rmrk_1}). Since the product of two GF is equivalent to the convolution of two sequences formed from the GF coefficients, the FFT can be used to compute GF products more efficiently compared to the term-by-term calculation. The basic idea of the algorithm proposed in \cite{Belfore1995} is to apply the FFT to compute the GF $A(z)$ using a \emph{divide-and-conquer} approach. More details on the implementation of the algorithm can be found therein.

\begin{remark}
Despite the fact that the FFT-based algorithm is more sophisticated and has lower asymptotic complexity, CFE and Algorithm 1 are sufficient in the context of SGD, where the number of GWs $N$ is relatively small.
\end{remark}

\section{Approximation Methods for Estimating SOP} \label{Section_IV}
Afterwards, we introduce some useful methods to \linebreak approximate the SOP, exploiting the fact that $P_{{\mathrm{out}}}^{{\mathrm{sys}}}(L,N) = \linebreak \mathbb{P}({S_\mathcal{N}} \geq L) = 1 - \mathbb{P}({S_\mathcal{N}} \leq L - 1)$, $\forall L \in {\mathcal{N}_0}$. These techniques consist of \emph{probability distributions (binomial, Poisson, normal)} as well as a \emph{Chernoff bound}. For convenience, a summary of approximation methods is given in Table \ref{Table_2}.

\begin{table*}[!t]
\centering
\begin{threeparttable}[b]
\caption{Summary of Approximation Methods}
\centering
\renewcommand{\arraystretch}{2.3}
\begin{tabular}{|c|c|c|c|}
\hline 
\textbf{Approximation Method} & \textbf{SOP Approximation Formula} $\widetilde P_{{\mathrm{out}}}^{{\mathrm{sys}}}(L,N)$ & \textbf{Parameters/Range of} $L$ & \makecell{\textbf{Condition for} \\ \textbf{Higher Accuracy}} \\  \hline 
Binomial Approximation (BA)\tnote{a} & $1 - \sum\limits_{m = 0}^{L - 1} {\binom{N}{m}{{\bar p}^m}{\bar q}^{N - m}}$ & $\bar p = \frac{1}{N}\sum\limits_{n \in \mathcal{N}} {{p_n}} $, \ $\bar q = 1 - \bar p$ & ${(N\bar p\bar q)^{ - 1}}\sigma _\mathcal{N}^2 \to 1$ \\ \hline
Poisson Approximation (PA)\tnote{b} & $1 - {e^{ - {\mu _\mathcal{N}}}}\sum\limits_{m = 0}^{L - 1} {\mu _\mathcal{N}^m{(m!)^{ - 1}}}$ & ---  & $\sum\limits_{n \in \mathcal{N}} {p_n^2} \to 0$ \\ \hline
Normal Approximation (NA) & $1 - \Phi \left( \zeta  \right) = Q\left( \zeta  \right)$ & \multirow{2}{*}{$\zeta  = (L - {\mu _\mathcal{N}} - 0.5)\sigma _\mathcal{N}^{ - 1}$} & \multirow{2}{*}{$\sigma _\mathcal{N}^2 \to \infty$} \\ \cline{1-2} 
Refined Normal Approximation (RNA) & $\min \left(\max \left( 1 - G( \zeta ),0 \right),1 \right)$ & & \\ \hline
Chernoff Bound (CB) & ${({\mu _\mathcal{N}}/L)^L}{e^{L - {\mu _\mathcal{N}}}}$ & $\forall L \in \left\{ {\left\lfloor {{\mu _\mathcal{N}}} \right\rfloor  + 1,\left\lfloor {{\mu _\mathcal{N}}} \right\rfloor + 2, \ldots ,N} \right\}$  &  --- \\ \hline
\end{tabular}
\begin{tablenotes}
\item[a,b] According to the numerical results (Section \ref{Section_V}), BA and PA are the most appropriate approximation methods for SGD systems operating in EHF bands. 
\end{tablenotes}
\label{Table_2}
\end{threeparttable}
\end{table*}

\subsection{Binomial Approximation (BA)} \label{Section_IV:BA}
The PBD can be approximated by the binomial distribution \cite{Ehm1991} in the following sense, defining \linebreak $\bar p = \frac{1}{N}\sum\limits_{n \in \mathcal{N}} {{p_n}} $, $\bar q = 1 - \bar p$,  and assuming $\bar p \in (0,1)$: \linebreak a) ${d_{{\mathrm{TV}}}}({S_\mathcal{N}},Y) \leq (N/(N + 1))(1 - {\bar p^{N + 1}} - {\bar q^{N + 1}}){\delta _\mathcal{N}}$, where $Y \sim {\mathrm{Bin}}(N,\bar p)$ and ${\delta _\mathcal{N}} = 1 - {(N\bar p\bar q)^{ - 1}}\sigma _\mathcal{N}^2$, and b) ${d_{{\mathrm{TV}}}}({S_\mathcal{N}},Y) \to 0$ \emph{if and only if (iff)} ${\delta _\mathcal{N}} \to 0$ $($or, equivalently, ${(N\bar p\bar q)^{ - 1}}\sigma _\mathcal{N}^2 \to 1)$. It is interesting to note that when ${p_n} = p$, $\forall n \in \mathcal{N}$, it holds that: $\bar p = p$, $\bar q = 1 - p$ and $\sigma _\mathcal{N}^2 = N\bar p\bar q$ $\Rightarrow$ ${\delta _\mathcal{N}} = 0$ $\Rightarrow$ ${d_{{\mathrm{TV}}}}({S_\mathcal{N}},Y) = 0$ $\Rightarrow$ ${S_\mathcal{N}} \sim {\mathrm{Bin}}(N,p)$, which is in agreement with Section \ref{Section_I:binomial}. Hence, the BA is given by:
\begin{equation}
\scalebox{0.93}{$
P_{{\mathrm{out}}}^{{\mathrm{sys}}}(L,N) \approx 1 - \mathbb{P}(Y \leq L - 1) = 1 - \sum\limits_{m = 0}^{L - 1} {\binom{N}{m}{{\bar p}^m}{(1 - \bar p)^{N - m}}}$}
\end{equation}

\subsection{Poisson Approximation (PA)}
In 1960, Le Cam \cite{LeCam1960} established a remarkable inequality: ${d_{{\mathrm{TV}}}}({S_\mathcal{N}},Z) \leq \sum\limits_{n \in \mathcal{N}} {p_n^2}$, where $Z \sim {\mathrm{Pois}}({\mu _\mathcal{N}})$. It is obvious that if $\sum\limits_{n \in \mathcal{N}} {p_n^2} \to 0$, then ${d_{{\mathrm{TV}}}}({S_\mathcal{N}},Z) \to 0$. As reported in \cite{Steele1994}, \emph{Le Cam's theorem/inequality} admits various proofs using different techniques. Consequently, we have that:
\begin{equation}
\scalebox{0.93}{$
P_{{\mathrm{out}}}^{{\mathrm{sys}}}(L,N) \approx 1 - \mathbb{P}(Z \leq L - 1) = 1 - {e^{ - {\mu _\mathcal{N}}}}\sum\limits_{m = 0}^{L - 1} {\mu _\mathcal{N}^m{(m!)^{ - 1}}}$}
\end{equation}

\subsection{Normal Approximation (NA)}
According to \cite{Deheuvels1989}, \emph{the central limit theorem (CLT)} for the PBD states that: $\mathop {\lim }\limits_{N \to \infty } {\Delta _\mathcal{N}} = 0$ $($asymptotic normality of $({S_\mathcal{N}} - {\mu _\mathcal{N}})\sigma _\mathcal{N}^{ - 1})$ \emph{iff} $\mathop {\lim }\limits_{N \to \infty } \sigma _\mathcal{N}^2 = \infty $, where ${\Delta _\mathcal{N}} = \mathop {\sup}\limits_{s \in \mathbb{R}} \left| {\mathbb{P}({S_\mathcal{N}} \leq s) - \Phi \left( {(s - {\mu _\mathcal{N}})\sigma _\mathcal{N}^{ - 1}} \right)} \right|$. Therefore, by applying a \emph{continuity correction},\footnote{In probability theory, a \emph{continuity correction} is an adjustment that is made when a discrete (probability) distribution is approximated by a continuous distribution. In particular, suppose that the continuous RV $Y$ approximates the discrete RV $X$. Then, $\mathbb{P}(X \leq m) = \mathbb{P}(X \leq m + 0.5) \approx \mathbb{P}(Y \leq m + 0.5)$, $\forall m \in \mathbb{Z}$.} the SOP can be approximated by: 
\begin{equation}
P_{{\mathrm{out}}}^{{\mathrm{sys}}}(L,N) \approx 1 - \Phi ( \zeta ) = Q( \zeta )
\end{equation}
where $\zeta  = (L - {\mu _\mathcal{N}} - 0.5)\sigma _\mathcal{N}^{ - 1}$.

\subsection{Refined Normal Approximation (RNA)}
Consider the following function:
\begin{equation}\label{G_function}
G(x) = \Phi (x) + {\nu _\mathcal{N}}{(6\sigma _\mathcal{N}^3)^{ - 1}}(1 - {x^2})\varphi (x)
\end{equation}
According to \cite{Deheuvels1989, Mikhailov1994, Volkova1996}, there exists a constant $C < \infty $ such that ${\Delta '_\mathcal{N}} = \mathop {\sup }\limits_{s \in \mathbb{R}} \left| {\mathbb{P}({S_\mathcal{N}} \leq s) - G\left( {(s - {\mu _\mathcal{N}})\sigma _\mathcal{N}^{ - 1}} \right)} \right| \leq C\sigma _\mathcal{N}^{ - 2} = O(\sigma _\mathcal{N}^{ - 2})$. Observe that $\mathop {\lim }\limits_{N \to \infty } {\Delta '_\mathcal{N}} = 0$, when $\mathop {\lim }\limits_{N \to \infty } \sigma _\mathcal{N}^2 = \infty$. As a result, by applying the \emph{continuity correction} once more, we obtain the following approximation: 
\begin{equation}
P_{{\mathrm{out}}}^{{\mathrm{sys}}}(L,N) \approx \min \left(\max \left(\widehat P_{{\mathrm{out}}}^{{\mathrm{sys}}}(L,N),0 \right),1 \right)
\end{equation}
where $\widehat P_{{\mathrm{out}}}^{{\mathrm{sys}}}(L,N) = 1 - G( \zeta )$ and $\zeta  = (L - {\mu _\mathcal{N}} - 0.5)\sigma _\mathcal{N}^{ - 1}$. Note that we make use of the above min-max formula in order to ensure that $P_{{\mathrm{out}}}^{{\mathrm{sys}}}(L,N) \in [0,1]$, because $\widehat P_{{\mathrm{out}}}^{{\mathrm{sys}}}(L,N)$ may be outside the interval $[0,1]$ in some cases.

\subsection{Chernoff Bound (CB)}
A Chernoff (upper) bound can be constructed using a result given in \cite{Hagerup1990} which states that: $\mathbb{P}\left({S_\mathcal{N}} \geq (1 + \delta ){\mu _\mathcal{N}}\right) \leq {\left( {e^\delta }/{(1 + \delta )^{1 + \delta }} \right)^{{\mu _\mathcal{N}}}}$, $\forall \delta > 0$. Specifically, by setting $(1 + \delta ){\mu _\mathcal{N}} = L$ and assuming ${\mu _\mathcal{N}} > 0$, we obtain: 
\begin{equation}
P_{{\mathrm{out}}}^{{\mathrm{sys}}}(L,N) \leq {({\mu _\mathcal{N}}/L)^L}{e^{L - {\mu _\mathcal{N}}}} 
\end{equation}
which holds $\forall L \in \left\{ {\left\lfloor {{\mu _\mathcal{N}}} \right\rfloor  + 1,\left\lfloor {{\mu _\mathcal{N}}} \right\rfloor + 2, \ldots ,N} \right\}$, since \linebreak $\delta  > 0$ $ \Leftrightarrow $ $L > {\mu _\mathcal{N}}$ $ \Leftrightarrow $ $L > \left\lfloor {{\mu _\mathcal{N}}} \right\rfloor $ $ \Leftrightarrow $ $L \geq \left\lfloor {{\mu _\mathcal{N}}} \right\rfloor  + 1$.

\section{Numerical Results and Discussion} \label{Section_V}

In this section, all results present statistical averages derived from ${10^3}$ independent system configurations, where the GW outage probabilities ${\{ {p_i}\} _{i \in {\mathcal{N} \cup \mathcal{K}}}}$ are uniformly distributed in $(0,0.02)$, i.e., 98\% to 100\% link availability.

\subsection{SOP Analysis}

\begin{figure}[!t]
\centering
\includegraphics[width=3.4in]{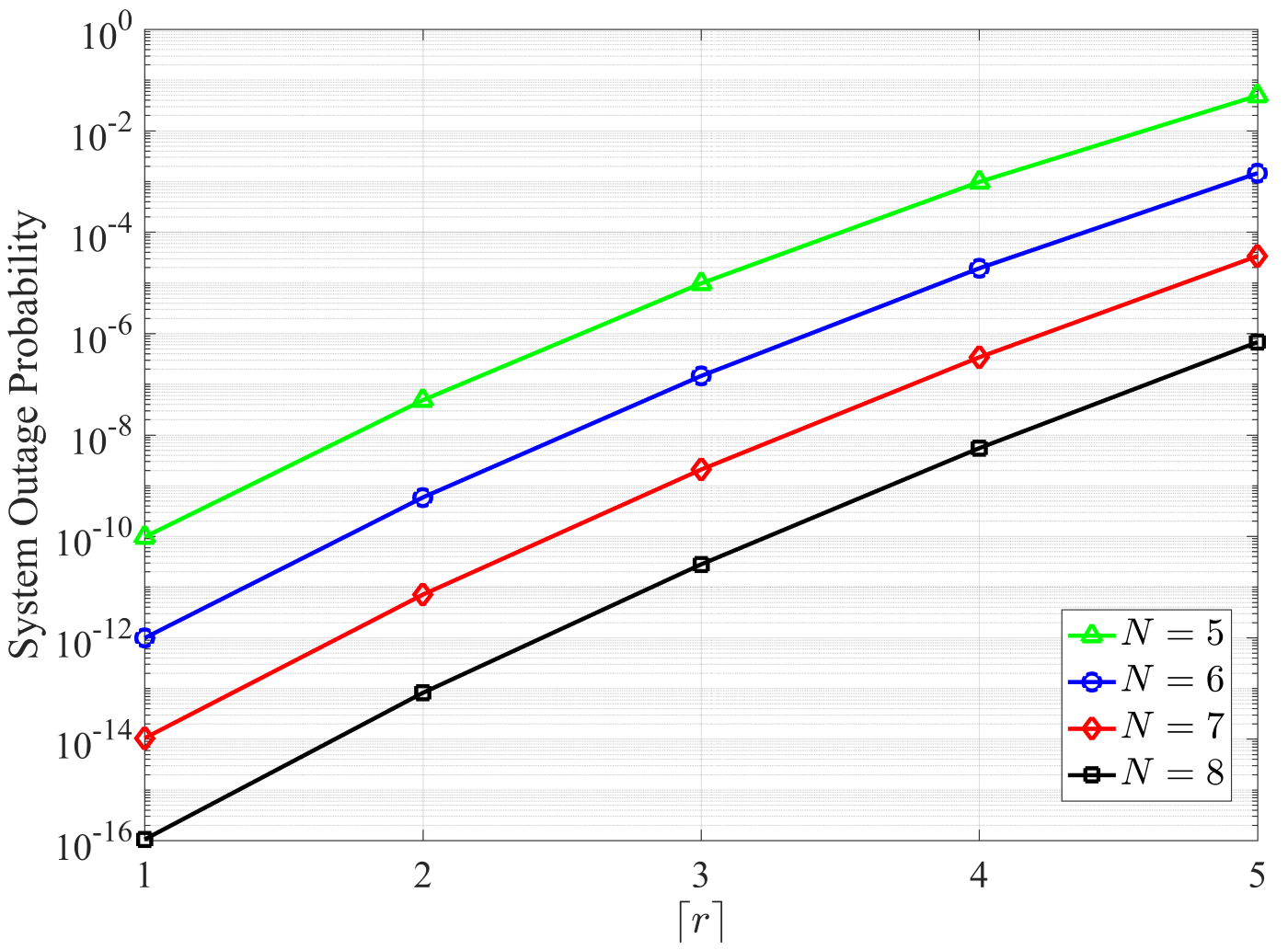} 
\caption{System outage probability, $P_{{\mathrm{out}}}^{{\mathrm{sys}}}$, (calculated using Algorithm 1) versus the ceiling of $r$ (the ratio of the traffic demand to the GW capacity).}
\label{Fig1}
\end{figure}

Firstly, we study the SOP as a function of the number of GWs, $N$, and the ratio of the traffic demand to the GW capacity, $r$. As shown in Fig. \ref{Fig1}, the SOP increases with $\left\lceil r \right\rceil$ for all values of $N$, which is in accordance with Proposition \ref{prop_1}. Moreover, for any fixed $\left\lceil r \right\rceil$, we can observe that the SOP decreases with the increase of $N$ (see Proposition \ref{prop_2}). Nevertheless, as mentioned at the end of Section \ref{Section_II:SOP_improvement_factor}, this SOP improvement is achieved in exchange for higher connectivity complexity.

\begin{figure}[!t]
\centering
\includegraphics[width=3.37in]{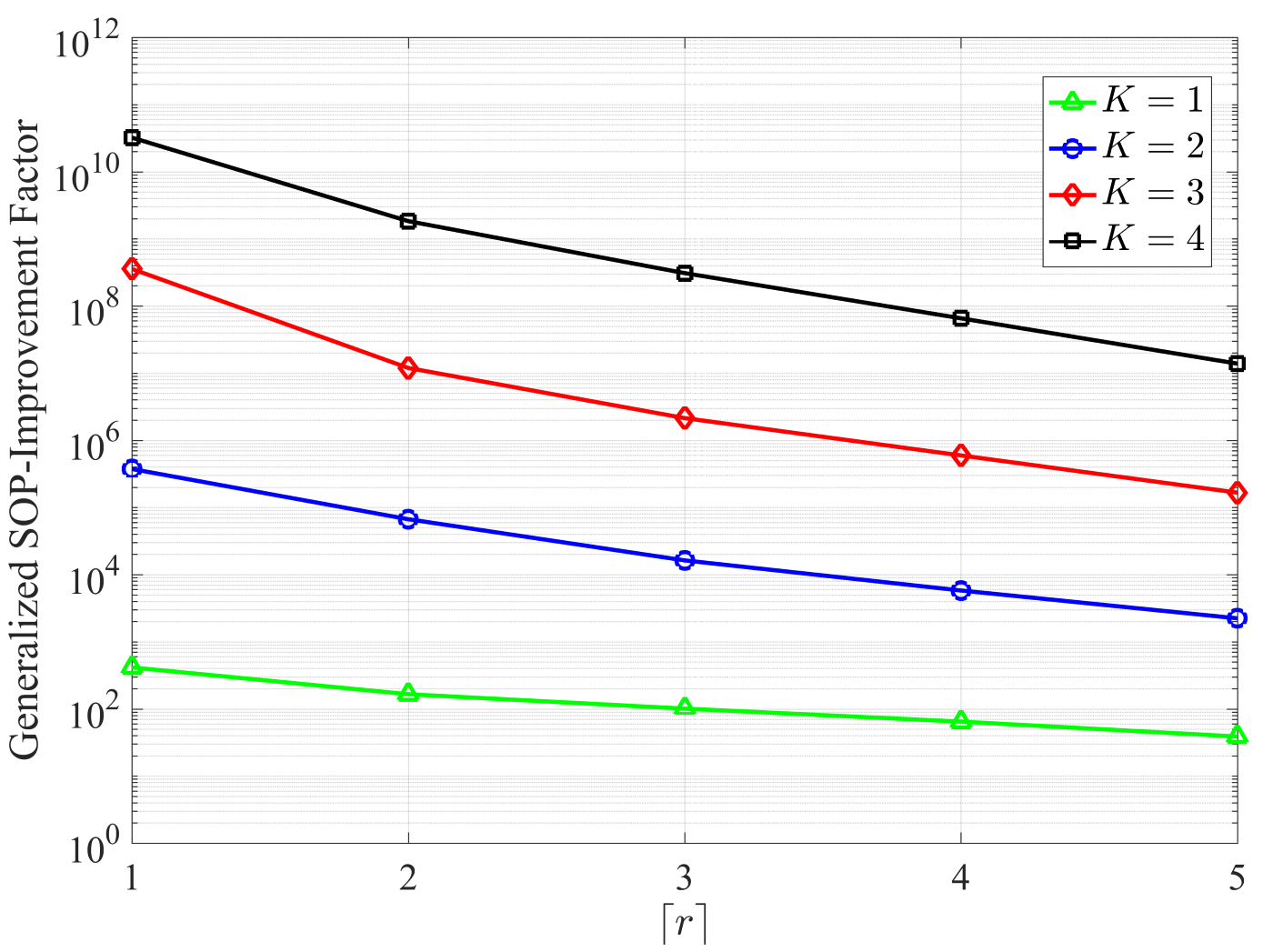} 
\caption{Generalized SOP-improvement factor, $I_{\mathrm{g}}$, (computed using Algorithm 1), in comparison with a diversity system consisting of $N=5$ GWs, versus the ceiling of $r$ (the ratio of the traffic demand to the GW capacity).}
\label{Fig2}
\end{figure}

Secondly, we examine the performance enhancement achieved by a $(5+K)$-GW compared to a $5$-GW diversity system by means of the generalized SOP-improvement factor $($where $K \in \{1,2,3,4\}$ is the number of additional GWs$)$. Specifically, as illustrated in Fig. \ref{Fig2}, $I_{\mathrm{g}}$ decreases with the increase of $\left\lceil r \right\rceil$ for every value of $K$. Furthermore, for a given $\left\lceil r \right\rceil$, larger number of additional GWs results in higher performance improvement.

\subsection{Performance of Approximation Methods}

In order to evaluate the accuracy of a probability distribution and the tightness/sharpness of the Chernoff bound, we define the maximum absolute error (maxAE), the root-mean-square error (RMSE), and the mean absolute error (MAE) as follows: 
\begin{equation}
{{\epsilon} _{\mathrm{max}}}(N) = \mathop {\max }\limits_{L \in \mathcal{S}} {\left| {P_{{\mathrm{out}}}^{{\mathrm{sys}}}(L,N) - \widetilde P_{{\mathrm{out}}}^{{\mathrm{sys}}}(L,N)} \right|}
\end{equation}
\begin{equation}
{{\epsilon} _{\mathrm{rms}}}(N) = \sqrt{\tfrac{1}{{\left|\mathcal{S}\right|}}\sum\limits_{L \in {\mathcal{S}}} {\left( {P_{{\mathrm{out}}}^{{\mathrm{sys}}}(L,N) - \widetilde P_{{\mathrm{out}}}^{{\mathrm{sys}}}(L,N)} \right)^2}}
\end{equation}
\begin{equation}
{{\epsilon} _{\mathrm{mean}}}(N) = \tfrac{1}{{\left|\mathcal{S}\right|}}\sum\limits_{L \in {\mathcal{S}}} {\left| {P_{{\mathrm{out}}}^{{\mathrm{sys}}}(L,N) - \widetilde P_{{\mathrm{out}}}^{{\mathrm{sys}}}(L,N)} \right|}
\end{equation}
where $\widetilde P_{{\mathrm{out}}}^{{\mathrm{sys}}}(L,N)$ is the approximate SOP. Moreover, for probability distributions $\mathcal{S} = \mathcal{N}_0$ $($with $\left|\mathcal{S}\right| = N + 1)$, while for CB $\mathcal{S} = \left\{ {\left\lfloor {{\mu _\mathcal{N}}} \right\rfloor  + 1,\left\lfloor {{\mu _\mathcal{N}}} \right\rfloor + 2, \ldots ,N} \right\}$ $($with $\left|\mathcal{S}\right| = N - \left\lfloor {\mu _\mathcal{N}} \right\rfloor \geq 1)$. In general, it holds that ${{\epsilon} _{\mathrm{max}}}(N) \geq {{\epsilon} _{\mathrm{rms}}}(N) \geq {{\epsilon} _{\mathrm{mean}}}(N)$.

\begin{figure}[!t]
\centering
\subfloat[]{
\includegraphics[width=3.2in]{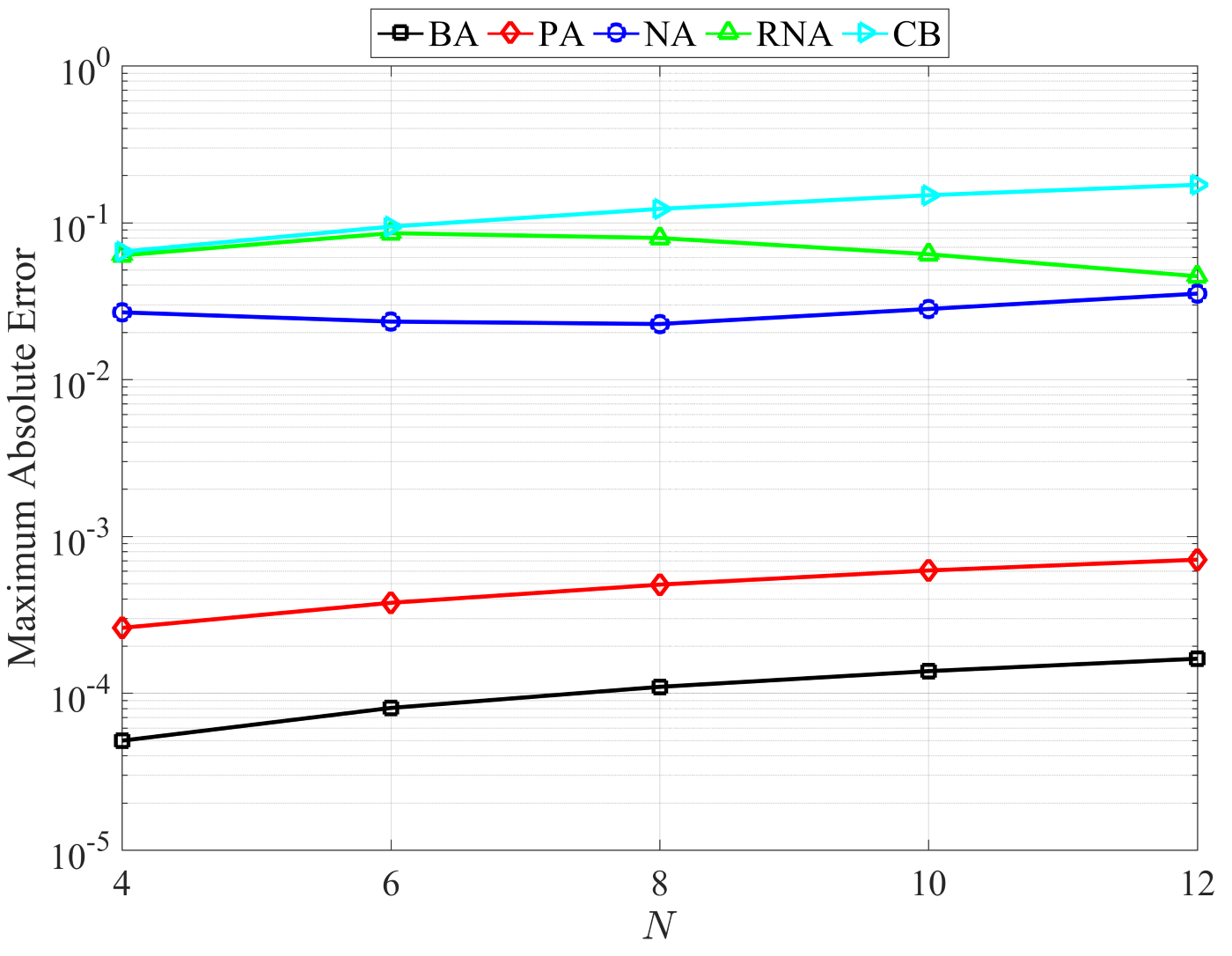} \label{Fig3a} } 
\hfill
\subfloat[]{
\includegraphics[width=3.2in]{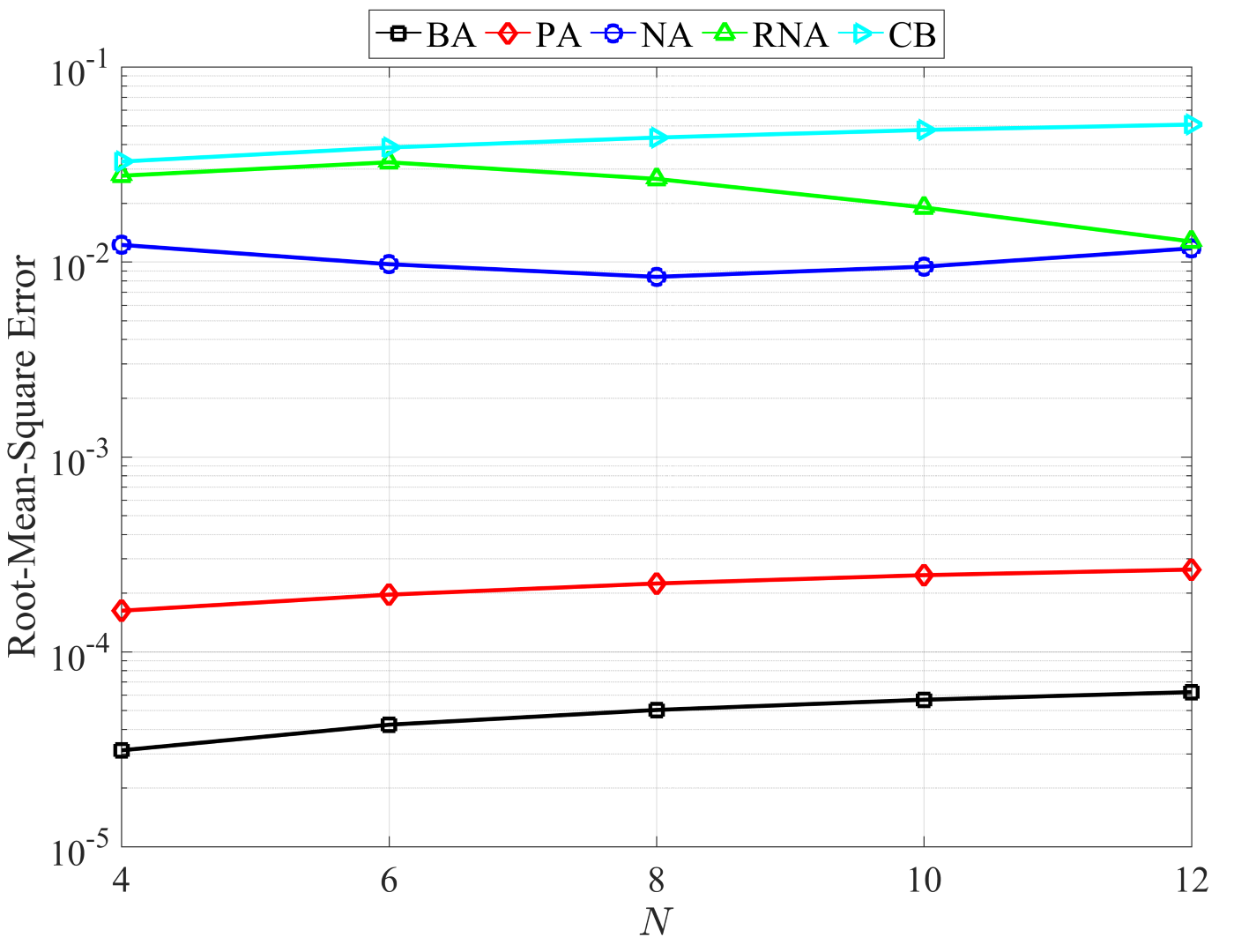} \label{Fig3b} } 
\hfill
\subfloat[]{
\includegraphics[width=3.2in]{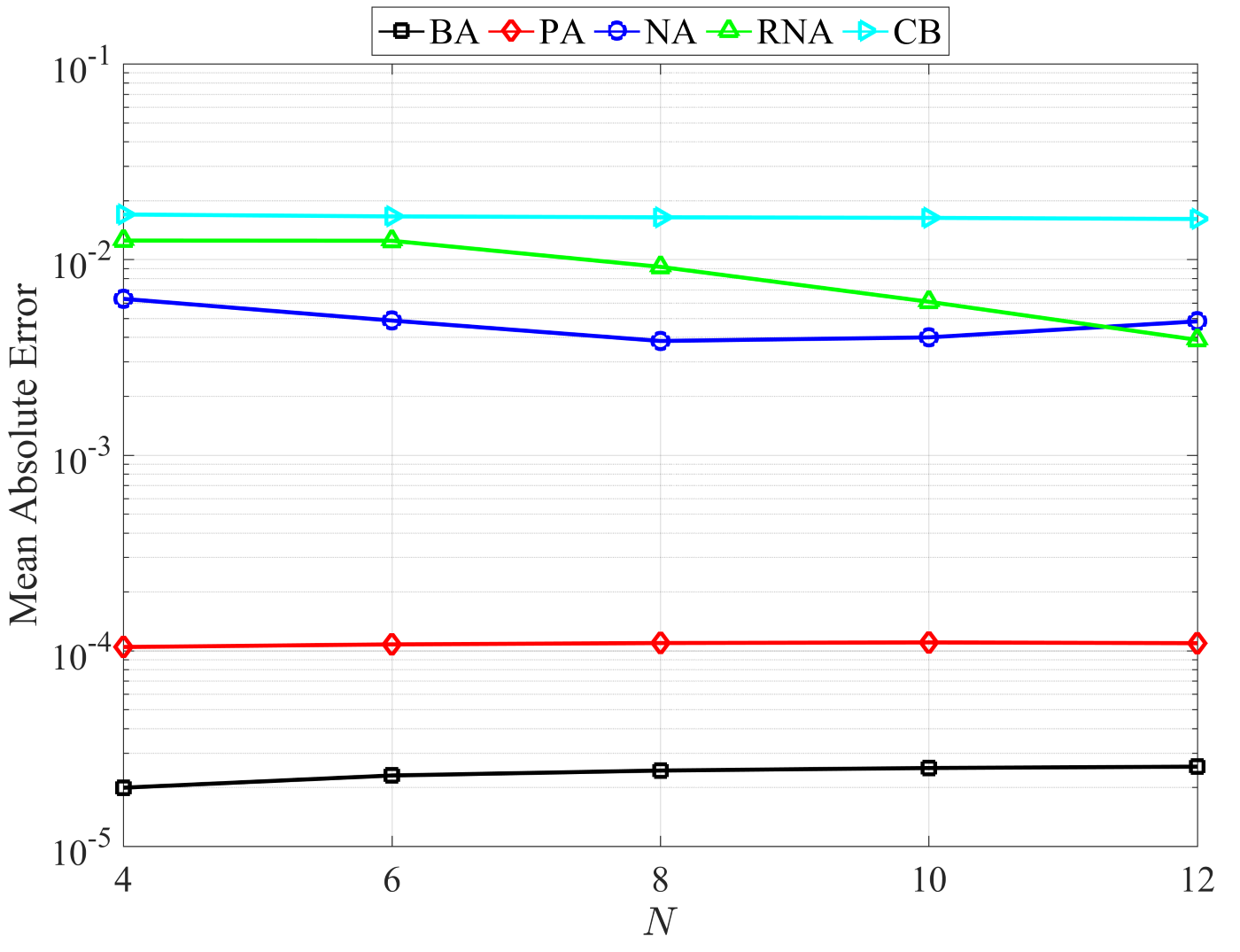} \label{Fig3c} } 
\caption{Accuracy comparison of approximation methods: (a) maximum absolute error, (b) root-mean-square error, and (c) mean absolute error versus the number of GWs.}
\label{Fig3}
\end{figure}

Fig. \ref{Fig3} presents the accuracy of approximation methods, in terms of maxAE, RMSE and MAE, versus the number of GWs. It can be observed that the approximation methods in descending-performance (or, equivalently, ascending-error) order are as follows: $\{$BA, PA, NA, RNA, CB$\}$. More specifically, BA and PA significantly outperform the other methods $($the achieved errors are of the order of $10^{-4}$ or $10^{-5})$, while CB exhibits the lowest accuracy. At this point, we would like to give an explanation of the performance of BA, PA, NA and RNA. In practice, \emph{the number of GWs is relatively small} ($N \approx 4\ \mathrm{to}\ 7$) and \emph{all the GW outage probabilities are very close to zero} (i.e., $p_n \approx 0$, $\forall n \in \mathcal{N}$ $\Rightarrow$ $p_1 \approx p_2 \approx \cdots \approx p_N$). As a result, the variance $\sigma _\mathcal{N}^2 = \sum\limits_{n \in \mathcal{N}} {{p_n}(1 - {p_n})}$ and the quantity $\sum\limits_{n \in \mathcal{N}} {p_n^2}$ are quite small, while $\sigma _\mathcal{N}^2 \approx N\bar p\bar q$ (see Section \ref{Section_IV:BA}). Finally, according to Table \ref{Table_2}, it is clear that the condition for higher accuracy of BA/PA is well satisfied, whereas that of NA/RNA is not. In summary, BA and PA are the most suitable approximation methods for SGD systems.

\section{Conclusion} \label{Section_VI}
In this paper, we have studied in depth the LS-SGD scheme, which has been recently introduced in SatNets. Furthermore, a number of useful mathematical tools have been presented in order to compute and approximate the SOP. Finally, based on the numerical results, we conclude that the SOP can be well approximated by BA and PA, since these methods achieve remarkable accuracy. Such approximations may be useful for simplifying and solving hard optimization problems with SOP-constraints in SGD-based SatNets.

\clearpage

\appendix[]

\subsection{Proof of Proposition \ref{prop_2}} \label{appndx_A}
By virtue of \emph{the law/theorem of total probability}, we obtain: 
\begin{equation}
\scalebox{0.91}{$
\begin{gathered}
  P_{{\mathrm{out}}}^{\mathcal{N} \cup \mathcal{K}} = \mathbb{P}({S_\mathcal{N}} + {S_\mathcal{K}} \geq L + K) =  \hfill \\
   = \sum\limits_{j = 0}^K {\mathbb{P}({S_\mathcal{K}} = j)\mathbb{P}({S_\mathcal{N}} + {S_\mathcal{K}} \geq L + K|{S_\mathcal{K}} = j)}  =  \hfill \\
   = \sum\limits_{j = 0}^K {\mathbb{P}({S_\mathcal{K}} = j)\mathbb{P}({S_\mathcal{N}} \geq L + K - j)}  =  \hfill \\
   = \sum\limits_{j = 0}^K {\mathbb{P}({S_\mathcal{K}} = j)\left[ {\mathbb{P}({S_\mathcal{N}} \geq L) - \mathbb{P}(L \leq {S_\mathcal{N}} \leq L + K - j - 1)} \right]}  \leq  \hfill \\
   \leq \mathbb{P}({S_\mathcal{N}} \geq L)\underbrace {\sum\limits_{j = 0}^K {\mathbb{P}({S_\mathcal{K}} = j)} }_{ = 1} = \mathbb{P}({S_\mathcal{N}} \geq L) = P_{{\mathrm{out}}}^\mathcal{N} \hfill \\
\end{gathered}$}
\end{equation}
and the proposition follows.

\subsection{Proof of Theorem \ref{thrm_RF}} \label{appndx_B}
Firstly, the initial conditions of the RF are trivially true. Secondly, from \emph{the law/theorem of total probability}, the SOP $P_{{\mathrm{out}}}^{{\mathrm{sys}}}(L,N) = \mathbb{P}({S_\mathcal{N}} \geq L)$ can be written as follows:
\begin{equation}
\scalebox{0.92}{$\begin{gathered}
  P_{{\mathrm{out}}}^{{\mathrm{sys}}}(L,N) = \sum\limits_{j = 0}^1 {\mathbb{P}({X_N} = j)\mathbb{P}({S_\mathcal{N}} \geq L|{X_N} = j)}  =  \hfill \\
   = \sum\limits_{j = 0}^1 {\mathbb{P}({X_N} = j)\mathbb{P}({S_{\mathcal{N}\backslash \{ N\} }} \geq L - j)}  =  \hfill \\
   = \mathbb{P}({X_N} = 0)\mathbb{P}({S_{\mathcal{N}\backslash \{ N\} }} \geq L) + \mathbb{P}({X_N} = 1)\mathbb{P}({S_{\mathcal{N}\backslash \{ N\} }} \geq L - 1) \hfill \\ 
\end{gathered}$}
\end{equation}
where ${S_{\mathcal{N}\backslash \{ N\} }} = \sum\limits_{n \in \mathcal{N}\backslash \{ N\} } {{X_n}}  = {S_\mathcal{N}} - {X_N}$. Due to the fact that $\mathbb{P}({X_N} = 0) = 1-{p_N}$ and $\mathbb{P}({X_N} = 1) = {p_N}$, we get \eqref{RF} and this completes the proof.


\end{document}